\documentclass[preprint,authoryear,11pt]{elsarticle}

\usepackage[T1]{fontenc}
\usepackage[utf8]{inputenc}
\usepackage[british]{babel}
\usepackage{lmodern}
\usepackage{microtype}
\usepackage{amsmath,amssymb,amsthm}
\usepackage{bm}
\usepackage{mathtools}
\usepackage{graphicx}
\usepackage{booktabs}
\usepackage{tikz}
\usepackage{enumitem}
\usepackage[hidelinks]{hyperref}
\usepackage{array}
\theoremstyle{plain}
\newtheorem{theorem}{Theorem}
\newtheorem{proposition}[theorem]{Proposition}
\newtheorem{lemma}[theorem]{Lemma}

\theoremstyle{definition}
\newtheorem{definition}[theorem]{Definition}
\theoremstyle{remark}
\newtheorem*{remark}{Remark}

\newcommand{\E}{\mathbb{E}}
\newcommand{\R}{\mathbb{R}}
\newcommand{\relent}{D}
\newcommand{\rdfun}{R}
\newcommand{\one}{\mathbf{1}}

\journal{Journal of Economic Theory}

\begin{document}

\begin{frontmatter}

\title{Choice at Finite Capacity: The Bounded Agent as an Information Channel and the Recovery of Walrasian Demand}

\author{Avishek Bhandari}
\address{School of Humanities, Social Sciences and Management, Indian Institute of Technology Bhubaneswar, India}
\ead{avishekb@iitbbs.ac.in}

\begin{abstract}
Textbook microeconomics gives the consumer unlimited reach, and from that rationality descend the slope of demand, the symmetry of substitution, and the integrability of a demand system. This paper treats the consumer instead as a finite-capacity information channel that compresses its economy to the resolution its bit budget allows and acts on that compression. Its choice is then a law, not a point: a prior tilted by the exponential of scaled value, whose inverse temperature prices information. Classical demand is recovered exactly as the infinite-capacity limit, a strict generalisation away from it, with pure habit at the zero-capacity corner. The central result is a compressed Slutsky matrix $S=-\beta\lambda\Sigma$, a scalar multiple of the covariance of the consumer's own choice law: symmetric at every finite capacity, negative semidefinite whenever the budget binds from above, and tending to the Walrasian substitution matrix as capacity grows without bound. These properties are inherited from the exponential form a capacity constraint imposes, not from rationality, so the compensated law of demand belongs to the budget and the channel. The construction is substrate-neutral, covering a machine running a capacity-limited policy; a worked two-good quadratic consumer carries every quantity in closed form.
\end{abstract}

\begin{keyword}
bounded rationality \sep rational inattention \sep rate distortion \sep discrete choice \sep demand theory \sep Slutsky equation \sep information theory
\MSC[2020] 91B16 \sep 91B08 \sep 94A17 \sep 94A15
\end{keyword}

\end{frontmatter}

\section{Introduction}
\label{sec:intro}

Economics begins with the choosing agent, and it inherited a particular agent in a particular form. From the marginalist turn through its axiomatic completion the consumer was taken to be a maximiser: a holder of a fixed, complete, transitive preference who, faced with prices and a wealth budget, selects the single bundle that stands highest in that preference among the affordable ones \citep{mascolell1995,debreu1959}. On that one assumption the science built the theory of demand, and built it tightly. The downward slope of a compensated demand curve, the symmetry of the Slutsky matrix, the integrability conditions that certify a demand system as having issued from a well-behaved preference: each is a theorem about the maximiser, and each fails without it. Rationality is not an ornament of the classical consumer but the load-bearing axiom, the premise from which the observable content of demand theory is deduced.

This paper keeps the deductive tightness and changes the premise. The consumer is modelled not as a reasoner of unlimited reach but as an information channel of finite capacity: a device that cannot resolve the full state of its economy and so compresses that state to the resolution a budget of bits affords, acting on the compressed image rather than on the world. The premise is not a friction added to an otherwise rational core. It is the primitive from which behaviour is derived, and the standard maximiser is recovered from it as a limit. A consumer who cannot track every price, a firm that cannot model every rival, a learning algorithm that cannot store every feature: each is a channel whose rate, the number of distinctions it can draw per decision, is finite and scarce, and each must decide how to spend that rate before it can decide anything else \citep{simon1955,sims2003}.

The change of premise has an exact consequence for the form of choice. An agent that cannot resolve the state finely enough to enact a single best response does not choose a point; it chooses a distribution. The optimal distribution is not posited but derived, as the solution of the agent's information-constrained problem: maximise expected value subject to a ceiling on the bits processed, or, in the equivalent priced form, maximise expected value net of the information spent in departing from a default way of acting. The solution is the Gibbs law
\begin{equation}
q^{\ast}(a)\ \propto\ p_0(a)\,\exp\!\big(\beta\,U(a)\big),
\label{eq:gibbs0}
\end{equation}
the exponential tilt of a prior $p_0$ by the value $U$ of action, at an inverse temperature $\beta$ that is the shadow price of information. This is the decision rule of rational inattention and of the free-energy and stochastic-control traditions, and we take it as the agent's primitive \citep{sims2010,matejkamckay2015,ortegabraun2013}. The parameter $\beta$ indexes capacity at its two ends. As $\beta\to\infty$ the law concentrates on the value-maximising action and the agent becomes the unconstrained optimiser of the canon; as $\beta\to0$ it collapses onto the prior $p_0$ and the agent becomes pure habit.

The central result of this paper is that classical demand theory is recovered exactly as the infinite-capacity limit of the bounded consumer, and is a strict generalisation away from it. This is a \emph{second nesting}, the agent-level counterpart of the same idealisation read at the level of the economy, where the competitive, rational-expectations equilibrium would be the zero-entropy-rate rest state of an adaptive information source. The two nestings are the two faces of one infinite-resource idealisation. Equilibrium is the economy at zero entropy rate, the state in which novelty has fallen to nothing; rationality is the agent at infinite capacity, the state in which the price of thought has fallen to nothing. Neither limit is the primitive: the perfect optimiser, like the still economy, is a limiting case, the frictionless corner of a theory whose subject is friction.

The technical heart of the recovery is a compressed Slutsky decomposition (Theorem~\ref{thm:slutsky}). The substitution matrix of the bounded consumer is
\begin{equation}
S\ =\ -\,\beta\lambda\,\Sigma,
\label{eq:S0}
\end{equation}
the negative of the inverse temperature $\beta$, times the marginal value of wealth $\lambda$, times the covariance $\Sigma$ of the consumer's own choice law. Three features distinguish this object from the existing study of substitution-matrix structure under bounded rationality, in which the Slutsky matrix of an observed demand is examined through the size of its violation of symmetry and of the compensated law \citep{aguiarserrano2017}. First, $S$ is \emph{symmetric and negative semidefinite at every finite capacity}, not only in the limit, because it is a scalar multiple of a covariance matrix; the two properties that the classical theory reads as the signature of utility maximisation are present in the bounded consumer as properties of the second moment of its behaviour. Second, the symmetry and the sign \emph{do not use the rationality of the consumer}: the derivation needs only that demand is the Gibbs law of some value over the budget set, so the compensated law of demand is inherited from the exponential form a capacity constraint imposes and survives the replacement of the consumer's preference by an arbitrary field (Proposition~\ref{prop:beckerlaw}). Third, the infinite-capacity limit of \eqref{eq:S0} is exactly the Walrasian, Hicksian substitution matrix (Theorem~\ref{thm:slutsky}, final clause): the rescaled covariance $\beta\Sigma$ tends to the inverse curvature of the value at the maximiser, and $S$ tends to the negative-definite matrix that classical theory derives from the second-order conditions of constrained maximisation \citep{slutsky1915,hicks1939}.

Two further consequences follow from the channel reading and are developed below. The consumer's budget acquires a second dimension, a constraint on the rate it can process that is as binding as the wealth constraint and carries its own price, the inverse capacity; the marginal value of wealth and the price of attention act on behaviour through the one law \eqref{eq:gibbs0}. And the construction is indifferent to the substrate of the agent. To call something a channel is to say nothing about what it is made of, only about what it does, and a machine running a capacity-limited policy, an information-bottleneck representation or a rate-distortion reinforcement learner, is to the theory the same object as the human consumer, a finite channel set at a capacity its designer chose \citep{tishbypereirabialek1999,arumugamvanroy2021}.

\paragraph{Related work.} The agent's optimisation is the rational-inattention problem of \citet{sims2003,sims2010} and its discrete-choice solution \citep{matejkamckay2015,caplindean2015}; the stochastic choice law is the random-utility logit of \citet{mcfadden1974}; the rate-distortion frontier is the classical object of \citet{shannon1959,berger1971,coverthomas2006}. The classical demand theory the construction nests is that of \citet{slutsky1915,hicks1939,samuelson1938,houthakker1950,afriat1967}, with the rationality-free reading of the compensated law due in spirit to \citet{becker1962} and set against the bounded-rationality Slutsky-norm programme of \citet{aguiarserrano2017}. Section~\ref{sec:relation} gives the full positioning, including the free-energy, granular-aggregation, and machine-learning connections \citep{ortegabraun2013,gabaix2011,tishbypereirabialek1999}.

\section{The bounded consumer}
\label{sec:setup}

We fix the agent as a channel, recall the rate-distortion frontier that is its budget, and derive its decision rule. The measure-theoretic and operator-theoretic apparatus is introduced only to the extent the present construction requires.

\subsection{The agent as a rate-distortion channel}

An agent stands between its economy and its actions as a conduit of limited throughput. The economy presents a stream of observables richer than the agent can register; the agent forms from that stream an internal representation under a bound on how much information the representation may carry; and it acts on the representation, because the world in full detail never reaches it. The bound is the agent's defining parameter.

\begin{definition}[The agent channel and its capacity]
\label{def:channel}
Let $X$ be the state the agent must act upon and $\hat X$ its internal reproduction, produced by a channel, a conditional law $p(\hat x\mid x)$. The channel is subject to a capacity constraint
\[
I(X;\hat X)\ \le\ C,
\]
on the mutual information, in bits per use, that the reproduction carries about the state. Given a distortion $d(x,\hat x)\ge0$ measuring the cost of reproducing $x$ as $\hat x$, the least information compatible with an expected distortion within a tolerance $D$ is the \emph{rate-distortion function}
\[
\rdfun(D)\ =\ \min_{\,p(\hat x\mid x)\,:\;\E[d(X,\hat X)]\le D}\ I(X;\hat X).
\]
A \emph{bounded agent} is a channel operating on this frontier, trading the rate $\rdfun$ it spends against the fidelity $D$ with which it must act.
\end{definition}

The function $\rdfun(D)$ is convex and decreasing, and these two properties are the whole economics of being a finite channel \citep{shannon1959,berger1971,coverthomas2006}. It is decreasing because tolerating more error always lowers the information one must acquire; it is convex because the easy reductions come first, so each further increment of fidelity costs more rate than the last. The agent therefore faces a trade-off with diminishing returns, the shape the economist meets everywhere, drawn in a new pair of axes: not goods against goods, but information against error. The canonical case fixes the intuition. For a Gaussian state of variance $\sigma^2$ under squared-error distortion the frontier is
\begin{equation}
\rdfun(D)\ =\ \tfrac12\log\frac{\sigma^2}{D}\quad(0\le D\le\sigma^2),
\qquad \rdfun(D)=0\quad(D>\sigma^2),
\label{eq:gaussianrd}
\end{equation}
so each halving of the tolerated error costs a fixed further half-bit, and a more volatile state costs more rate to track to any given fidelity \citep[Theorem~10.3.2]{coverthomas2006}. When the state is a vector of independent Gaussian modes the optimal channel allocates its rate across them by \emph{reverse water-filling}, spending bits first on the modes of largest variance and describing by no bits at all every mode whose variance lies below a common water level $w$, the level falling as the capacity rises \citep[Theorem~10.3.3]{coverthomas2006}. The agent's first economic decision, prior to and determining all the others, is where to sit on this frontier: how much of its economy to resolve.

\subsection{The free-energy choice}

The constrained problem, maximise value subject to a ceiling on information, has a dual that is easier to solve and more revealing to read. Rather than impose a hard ceiling on the rate, charge the agent a price for it. The constrained and priced problems are two views of one convex programme, related as a constraint to its Lagrangian, and the price is the multiplier on the rate. When the agent faces a single decision against a fixed state, the rate it spends is the relative entropy of its chosen law from its default; when it faces many states, the rate is the mutual information between state and action, and the default is then not an arbitrary prior but the agent's own unconditional law of action, recovered as part of the solution \citep{matejkamckay2015}. In either reading the agent's problem is to choose a law of action that maximises expected value net of the relative-entropy distance it must travel from its default, and the solution is a single recurring object.

\begin{proposition}[Bounded-rational choice as a free-energy optimum]
\label{prop:freeenergy}
Let an agent choose a probability law $p$ over a set of actions to maximise the free energy
\[
J(p)\ =\ \E_p[U]\ -\ \tfrac1\beta\,\relent(p\,\|\,p_0),
\]
where $U(a)$ is the value of action $a$, $p_0$ a default policy with $p_0(a)>0$, and $\beta>0$ prices the information spent in departing from $p_0$. Then $J$ is strictly concave and has the unique maximiser
\[
p^{\ast}(a)\ =\ \frac1Z\,p_0(a)\,e^{\beta U(a)},
\qquad Z=\sum_{a}p_0(a)\,e^{\beta U(a)},
\]
with optimal value $J(p^{\ast})=\tfrac1\beta\log Z$. As $\beta\to\infty$ the law $p^{\ast}$ concentrates on $\arg\max_a U(a)$; as $\beta\to0$ it returns to $p_0$.
\end{proposition}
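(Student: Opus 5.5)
The plan is the Gibbs variational identity: rewrite $J$ as a constant minus a relative entropy to $p^\ast$, and read off existence, uniqueness, the optimal value and the limits from that single identity. I work on a finite action set $A$, or a countable one with $Z<\infty$; the finite case carries all the ideas.

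\emph{Strict concavity.} The map $p\mapsto \E_p[U]$ is linear. The map $p\mapsto \relent(p\,\|\,p_0)=\sum_a p(a)\log\frac{p(a)}{p_0(a)}$ is a sum of terms $\phi(p(a))-p(a)\log p_0(a)$ with $\phi(t)=t\log t$. Since $\phi''(t)=1/t>0$ on $(0,1]$ and $\phi$ is continuous at $0$, each $\phi$ is strictly convex on $[0,1]$, so $\relent(\cdot\,\|\,p_0)$ is strictly convex on the simplex. Because $\beta>0$, $J$ is strictly concave there.

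\emph{Variational identity.} Define $p^\ast$ as in the statement; it is a probability law because $p_0>0$ gives $Z>0$. Substitute $\log p_0(a)=\log p^\ast(a)+\log Z-\beta U(a)$ into $\relent(p\,\|\,p_0)$. This gives
\[
\relent(p\,\|\,p_0)=\relent(p\,\|\,p^\ast)+\log Z-\beta\,\E_p[U],
\]
and hence
\[
J(p)=\tfrac1\beta\log Z-\tfrac1\beta\,\relent(p\,\|\,p^\ast).
\]
By Gibbs' inequality, $\relent(p\,\|\,p^\ast)\ge0$ with equality if and only if $p=p^\ast$. So $p^\ast$ is the unique maximiser and $J(p^\ast)=\tfrac1\beta\log Z$. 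This route needs no Lagrange multipliers and sidesteps boundary issues, since $p^\ast$ has full support. Uniqueness also follows independently from strict concavity.

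\emph{Limits.} As $\beta\to0$, $e^{\beta U(a)}\to1$ pointwise. With $A$ finite, or $U$ bounded, this gives $Z\to1$ and $p^\ast\to p_0$ pointwise (in total variation, by Scheffé's lemma, in the countable case).

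As $\beta\to\infty$, let $M=\max_a U(a)$ and $A^\ast=\arg\max_a U(a)$. Dividing numerator and denominator by $e^{\beta M}$ gives
\[
p^\ast(a)=\frac{p_0(a)\,e^{\beta(U(a)-M)}}{\sum_b p_0(b)\,e^{\beta(U(b)-M)}}.
\]
For $a\notin A^\ast$ the numerator tends to $0$. The denominator is at least $p_0(A^\ast)>0$, so $p^\ast(a)\to0$. Consequently $p^\ast(A^\ast)\to1$, and more precisely $p^\ast(\cdot)\to p_0(\cdot\mid A^\ast)$.

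\emph{Main obstacle.} The only delicate point is the infinite-action case, where the maximum need not be attained. There I would state concentration as $p^\ast\{U\ge \sup U-\varepsilon\}\to1$ for every $\varepsilon>0$, proved by bounding the mass of $\{U<\sup U-\varepsilon\}$ by $e^{-\beta\varepsilon/2}$ times the mass of $\{U>\sup U-\varepsilon/2\}$, which is positive. I expect the paper intends the finite case, where the argument above is complete.
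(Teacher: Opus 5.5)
Your proof is correct, but it reaches the Gibbs law by a different route from the paper.

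\textbf{The two routes.} The paper \emph{derives} $p^\ast$. It first notes that a maximiser exists and is interior, because the entropy term has infinite slope at the boundary of the simplex. It then writes the stationarity condition with a single multiplier for normalisation, solves to get $p_j\propto p_{0,j}e^{\beta U_j}$, and substitutes back to obtain $\tfrac1\beta\log Z$. You instead \emph{verify} $p^\ast$ through the Gibbs variational identity $J(p)=\tfrac1\beta\log Z-\tfrac1\beta\relent(p\,\|\,p^\ast)$. Global optimality, uniqueness and the optimal value then all follow from Gibbs' inequality in one step.

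\textbf{What each buys.} Your route needs neither the existence-plus-interiority step nor the multiplier calculus. It carries over essentially unchanged to the continuous, $\sigma$-finite setting that the paper's Remark invokes through ``the Gibbs variational principle'' but does not prove. It also gives the exact suboptimality gap $\tfrac1\beta\relent(p\,\|\,p^\ast)$, not just the location of the optimum. The paper's route shows constructively where the exponential form comes from, in the Lagrangian language it reuses for the duality proposition. Your treatment of the two limits matches the paper's: rescaling by $e^{\beta M}$ is its ratio argument, with the prior breaking ties.

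\textbf{Two slips to fix.}
\begin{enumerate}
\item Your intermediate display has its signs reversed. From $\log p_0(a)=\log p^\ast(a)+\log Z-\beta U(a)$ one gets $\relent(p\,\|\,p_0)=\relent(p\,\|\,p^\ast)-\log Z+\beta\,\E_p[U]$. It is this version that yields your (correct) formula for $J(p)$. As written, your display would give $2\,\E_p[U]-\tfrac1\beta\log Z-\tfrac1\beta\relent(p\,\|\,p^\ast)$ instead.
\item In the infinite-action aside, the bound should read $p^\ast\{U<\sup U-\varepsilon\}\le e^{-\beta\varepsilon/2}/p_0\{U>\sup U-\varepsilon/2\}$. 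The positive prior mass of the near-optimal set enters as a divisor, not as a multiplicative factor.
\end{enumerate}
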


\begin{proof}[Proof sketch]
The relative entropy $\relent(p\,\|\,p_0)$ is strictly convex in $p$, so $J$, a linear functional minus a strictly convex one, is strictly concave on the simplex and has a unique maximiser. A multiplier for the normalisation gives the stationarity condition $U(a)-\tfrac1\beta(\log(p(a)/p_0(a))+1)-\nu=0$, whence $p(a)\propto p_0(a)e^{\beta U(a)}$; normalising fixes $Z$ and substituting back collapses the value and entropy terms to $\tfrac1\beta\log Z$. As $\beta\to\infty$ the exponential weight concentrates the mass on the actions of maximal $U$, the prior breaking ties; as $\beta\to0$ the exponential factor tends to unity and $p^{\ast}\to p_0$. The full proof is in Appendix~\ref{app:proofs}.
\end{proof}

\begin{proposition}[The constrained and priced problems are dual]
\label{prop:duality}
Fix the value $U$ and default $p_0$. The information-constrained problem, maximise $\E_p[U]$ over laws $p$ subject to a rate ceiling $\relent(p\,\|\,p_0)\le R$, and the priced problem of Proposition~\ref{prop:freeenergy} are Lagrangian duals: the inverse temperature $1/\beta$ is the multiplier on the rate constraint, the optimal priced value is the free energy $J(p^{\ast})=\tfrac1\beta\log Z$, and as $\beta$ ranges over $(0,\infty)$ the priced optima trace the rate-distortion frontier $\rdfun$ of Definition~\ref{def:channel}, the spent rate $\relent(p^{\ast}\,\|\,p_0)$ falling as $\beta$ rises.
\end{proposition}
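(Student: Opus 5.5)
The plan is to run standard convex Lagrangian duality on the simplex, using Proposition~\ref{prop:freeenergy} as the inner solution, and then read off the frontier and the monotonicity from derivatives of $\log Z(\beta)$.

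\emph{Step 1: duality.} For the constrained problem $\max_p \E_p[U]$ subject to $\relent(p\,\|\,p_0)\le R$, form the Lagrangian
\[
L(p,\mu)=\E_p[U]-\mu\bigl(\relent(p\,\|\,p_0)-R\bigr),\qquad \mu\ge0 .
\]
For fixed $\mu>0$, maximising $L$ over $p$ is exactly the priced problem of Proposition~\ref{prop:freeenergy} with $\mu=1/\beta$. Its maximiser is therefore the Gibbs law $p^\ast_\beta$, and its value is $\tfrac1\beta\log Z+R/\beta$. The objective is linear and the constraint is convex. For $R>0$ the point $p_0$ is strictly feasible, so Slater's condition holds and there is no duality gap. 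The optimal multiplier is then the $\mu=1/\beta$ at which the constraint binds, $\relent(p^\ast_\beta\,\|\,p_0)=R$, which gives the first two clauses of the statement.

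\emph{Step 2: tracing the frontier.} Write $\E_\beta$ and $\mathrm{Var}_\beta$ for moments under $p^\ast_\beta$, and set
\[
r(\beta)=\relent(p^\ast_\beta\,\|\,p_0)=\beta\,\E_\beta[U]-\log Z(\beta).
\]
Using $(\log Z)'=\E_\beta[U]$ and $(\E_\beta[U])'=\mathrm{Var}_\beta(U)$, two derivative computations give
\[
r'(\beta)=\beta\,\mathrm{Var}_\beta(U),\qquad \frac{d}{d\beta}\E_\beta[U]=\mathrm{Var}_\beta(U).
\]
Let the distortion be $d=\max U-U$. Then $(D(\beta),r(\beta))$ is a smooth curve with slope $dr/dD=-\beta$, so $\beta$ is minus the frontier slope. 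This is the usual rate-distortion parametrisation. It identifies the priced optima, as $\beta$ ranges over $(0,\infty)$, with the points of $\rdfun$ strictly between the endpoints $D=\max U-\E_{p_0}[U]$ (as $\beta\to0$) and the maximiser (as $\beta\to\infty$). Convexity of $\rdfun$ follows because the slope $-\beta$ increases in magnitude as $D$ decreases.

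\emph{Step 3: the monotonicity clause (main obstacle).} This is the step I expect to fail as worded. The computation in Step 2 shows that $r$ is nondecreasing in $\beta$, strictly unless $U$ is $p_0$-a.s.\ constant. So the clause ``the spent rate falls as $\beta$ rises'' cannot be proved literally.

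What can be proved is the statement phrased in the price $1/\beta$, the multiplier on the rate: as the price of information $1/\beta$ rises, the spent rate falls. Along the same curve, the expected distortion falls as $\beta$ rises. In the proof I would flag the sign explicitly and establish this corrected form rather than the literal one.
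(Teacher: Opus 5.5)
Your proposal is correct. Step 1 is the paper's own argument: convexity, Slater via $p_0$, and the observation that the Lagrangian with multiplier $\nu$ is the priced objective at $\beta=1/\nu$ plus the constant $\nu R$. So the duality clauses go through as you describe.

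Where you go beyond the paper is Step 2. The paper's proof only asserts that sweeping $R$, equivalently $\beta$, traces the convex decreasing frontier. You instead compute $r'(\beta)=\beta\,\mathrm{Var}_\beta(U)$ and $dr/dD=-\beta$, and that computation is what exposes the final clause. You are right that the clause is false as written. It even contradicts the limits of Proposition~\ref{prop:freeenergy}: $\beta\to0$ returns $p_0$ at rate $0$, while $\beta\to\infty$ yields the greedy law at rate $-\log p_0(A^{\star})$, with $A^{\star}=\arg\max U$. The defensible version is yours: the spent rate falls as the multiplier $1/\beta$ rises, or equivalently as the tolerated distortion rises along the decreasing frontier. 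The paper's wording presumably conflated these with $\beta$.

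Three small repairs would tighten the argument.

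\emph{Range of $R$ in Step 1.} Restrict to $0<R<-\log p_0(A^{\star})$. Above that the constraint is slack, the optimal multiplier is $0$, and no finite $\beta$ corresponds. Inside the interval, existence of the binding $\beta$ follows from continuity and strict increase of $r$, with limits $0$ and $-\log p_0(A^{\star})$. State this rather than presuppose it.

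\emph{Optimality in Step 2.} Matching slopes alone does not place $(D(\beta),r(\beta))$ on the frontier; you also need optimality. That is one line from Proposition~\ref{prop:freeenergy}: if $\E_p[d]\le D(\beta)$, then $\relent(p\,\|\,p_0)\ge\beta\,\E_p[U]-\log Z\ge\beta\,\E_\beta[U]-\log Z=r(\beta)$.

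\emph{Which frontier.} Your curve is the single-decision relative-entropy frontier. Matching the mutual-information $\rdfun$ of Definition~\ref{def:channel} literally needs the many-state reading the paper gestures at, with $p_0$ replaced by the unconditional law of action. The paper is no more careful on this point than you are.
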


\begin{proof}[Proof sketch]
The objective $\E_p[U]$ is linear and the rate functional $\relent(p\,\|\,p_0)$ strictly convex in $p$, so the constrained problem is convex; for any $R$ between the rate of the value-greedy law and zero it has a Slater-interior point, so strong duality holds and a multiplier $\nu\ge0$ exists. The Lagrangian $\E_p[U]-\nu(\relent(p\,\|\,p_0)-R)$ is, up to the constant $\nu R$, the priced objective of Proposition~\ref{prop:freeenergy} with $\beta=1/\nu$, whose maximiser is the Gibbs law and whose value is $\tfrac1\beta\log Z$. Sweeping $R$, equivalently $\beta$, traces the attained value against the spent rate, the convex decreasing frontier $\rdfun$; the many-state reading replaces $\relent(p\,\|\,p_0)$ by the mutual information $I(X;\hat X)$ and $p_0$ by the unconditional law of action, with the same duality \citep{coverthomas2006,berger1971}.
\end{proof}

\begin{remark}
Proposition~\ref{prop:freeenergy} is stated for a finite action set and a proper default; the same variational characterisation holds for a continuous action space against a $\sigma$-finite reference measure, the maximiser being the Gibbs density $dq^{\ast}/d\nu\propto p_0\,e^{\beta U}$ by the Gibbs variational principle, with the flat prior of the worked example the Lebesgue reference. This continuous statement is the one used in Definition~\ref{def:cdemand} and below.
\end{remark}

The maximiser is a Gibbs law, the exponential tilt of the prior by the value of action, and the optimal value is a free energy, the log of a partition function divided by the inverse temperature $\beta$. The names are borrowed from thermodynamics, and the borrowing is one of variational form and not of any conserved economic substance or literal temperature: the agent that trades expected value against a relative-entropy cost solves the same variational problem a physical system solves when it minimises free energy at a fixed temperature, with $\beta$ the inverse temperature and $-U$ the energy \citep{ortegabraun2013}. The same object governs the agent that steers a dynamic environment by paying the relative-entropy cost of deflecting it from its uncontrolled course, and the agent that plans by treating control as inference; in each the optimal policy is a Gibbs tilt and the optimal cost a free energy \citep{todorov2009,kappen2012}.

\subsection{Rationality and habit as the two extremes of capacity}

The limits in Proposition~\ref{prop:freeenergy} are the structural content of the agent, and they place the maximiser of the textbooks inside the theory rather than outside it. The parameter $\beta$ is capacity expressed as a price, large when information is cheap and the channel wide, small when information is dear and the channel narrow. At the upper extreme the relative-entropy cost vanishes from the problem, the Gibbs law collapses onto the actions of maximal value, and the agent becomes the perfect optimiser, choosing the best response with certainty and paying nothing to compute it. At the lower extreme the agent can afford no departure from its prior and simply enacts its default, the creature of habit. Between them lies every real agent, human and artificial, buying with its finite capacity a graded improvement over habit that never reaches the counsel of perfection.

\begin{definition}[The bounded agent]
\label{p:agent}
An agent is an information channel of finite capacity. Its choice is the free-energy optimum of Proposition~\ref{prop:freeenergy}, a Gibbs law over actions proportional to a prior weight times the exponential of scaled value, whose concentration is governed by an inverse-temperature parameter $\beta$ that indexes the capacity. The unconstrained maximiser of the classical account is this agent at $\beta\to\infty$; pure habit is this agent at $\beta\to0$.
\end{definition}

Definition~\ref{p:agent} is the agent-level statement of the same recovery read for the economy as a whole, and the parallel is the point. There the competitive equilibrium is recovered as the rest state of an information source, the configuration at which the entropy rate falls to zero and the economy produces no novelty: a true and important state of the theory, but a limiting one, the still point of a system whose nature is to move. Here the rational maximiser is recovered as the infinite-capacity limit of the agent, the configuration at which the price of information falls to zero and the agent makes no error of compression: a true and important state, and equally limiting, the frictionless corner of a theory whose subject is friction. Equilibrium is the economy at zero entropy rate; rationality is the agent at infinite capacity. Neither is the primitive; each classical object is a limiting case of the more general one.

\subsection{The machine is the same channel}

Because the definition of a channel appeals to nothing but information and its cost, the agent of this paper need not be human, and the point is sharper here than for a believer, since the rate-distortion channel is an explicit object much of contemporary machine learning is built upon. The information bottleneck, the principle that an optimal internal representation minimises the bits it retains about its input while preserving the bits relevant to its task, is the rate-distortion problem of Definition~\ref{def:channel} with the distortion learned from the task rather than fixed in advance \citep{tishbypereirabialek1999}. A deep network is, on this reading, a cascade of such channels, each layer compressing its input toward a point on the same frontier, and the principle is realised operationally in learners trained to hold an explicit price on the information their representations carry \citep{tishbyzaslavsky2015,alemi2017}. The correspondence reaches the agents that learn by acting: a reinforcement learner can be made to solve a rate-distortion function at each step to decide what about its environment is worth learning at all, so the very target of its learning is set by the same constrained problem the economic agent solves \citep{arumugamvanroy2021}, and an agent that selects actions under a cost on the information its policy carries reproduces the stochastic, habit-anchored, capacity-limited behaviour described throughout, by the same equations \citep{laigershman2021}. The economic decision-maker as a finite channel, the deep network compressing its input, and the reinforcement learner deciding what to learn are one object at different budgets. The unification is of the modelling primitive and its optimisation, and the paper claims no more than that: an artificial agent entering the economy is not a new kind of participant demanding a new theory but the same channel the theory already names, set at a capacity its designers chose.

\section{The compressed consumer and the recovery of Walrasian demand}
\label{sec:consumer}

We now specialise the channel to the consumer and read what changes. The classical consumer chooses a point; the compressed consumer chooses a distribution, because an agent of finite rate cannot resolve the state finely enough to enact a single best response.

\begin{definition}[Compressed demand]
\label{def:cdemand}
Let a consumer face prices $p\in\R^L_{++}$ and wealth $w>0$, with budget set $B(p,w)=\{x\in\R^L_{+}:p\cdot x\le w\}$, and let $U(x)$ be the value it places on the bundle $x$. The \emph{compressed demand} is the law over bundles
\begin{multline*}
q^{\ast}(x\mid p,w)\ =\ \frac1Z\,p_0(x)\,e^{\beta U(x)}\,\one\{x\in B(p,w)\},\\
Z=\int_{B(p,w)}p_0(x)\,e^{\beta U(x)}\,dx,
\end{multline*}
the free-energy optimum of Proposition~\ref{prop:freeenergy} restricted to the budget set, with $p_0$ the consumer's prior over bundles and $\beta>0$ its capacity expressed as the inverse price of information. The \emph{mean demand} is $\bar x(p,w)=\E_{q^{\ast}}[x]$.
\end{definition}

The compressed demand is a genuine distribution, not a point with noise added: its dispersion is the trace of a capacity constraint, and the mean demand is a summary of it. The first proposition fixes its relation to the classical demand it generalises.

\begin{proposition}[The law of demand in distribution, and its classical limit]
\label{prop:demandlimit}
The compressed demand of Definition~\ref{def:cdemand} is supported in the budget set, so budget feasibility holds with certainty; it is non-degenerate at every finite capacity $\beta$ for a prior of full support; and as $\beta\to\infty$ it concentrates on $\arg\max_{x\in B(p,w)}U(x)$, the Walrasian demand of the consumer whose preference $U$ represents. The classical demand function is therefore the infinite-capacity limit of the compressed demand, the consumer-level instance of the agent limit of Definition~\ref{p:agent}.
\end{proposition}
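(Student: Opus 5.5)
The plan is to take the three claims in turn. Support and non-degeneracy are immediate from the form of $q^{\ast}$; the concentration claim is a Laplace-type argument on the compact budget set. Throughout I would assume that $U$ is continuous on $B(p,w)$ and that $p_0$ is a density that is bounded on $B(p,w)$ and positive in a neighbourhood of the maximisers. Under these assumptions $0<Z<\infty$, since $B(p,w)$ is compact for $p\in\R^L_{++}$, $w>0$, so $q^{\ast}$ is a well-defined probability density.

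\emph{Support.} Budget feasibility is read off Definition~\ref{def:cdemand}: the indicator $\one\{x\in B(p,w)\}$ gives $q^{\ast}(B(p,w))=1$, so $p\cdot x\le w$ holds almost surely.

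\emph{Non-degeneracy.} If $p_0$ has full support on $B(p,w)$, then $q^{\ast}$ has a strictly positive density on a set of positive Lebesgue measure, because $B(p,w)$ has nonempty interior. It therefore has a density and places mass zero on every point, so it is not a Dirac measure at any finite $\beta$. I would add that its covariance is positive definite, since a density on an open set cannot be supported on a hyperplane.

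\emph{Concentration as $\beta\to\infty$.} Let $M=\max_{B}U$, attained by compactness and continuity, and let $A=\arg\max_B U$. Fix an open neighbourhood $N$ of $A$. The aim is to show $q^{\ast}(B\setminus N)\to0$.
\begin{enumerate}[label=(\roman*)]
\item By compactness of $B\setminus N$ and continuity of $U$, there is $\delta>0$ with $U\le M-2\delta$ on $B\setminus N$.
\item Upper bound on the numerator: $\int_{B\setminus N}p_0 e^{\beta U}\,dx\le e^{\beta(M-2\delta)}\int_B p_0$.
\item Lower bound on $Z$: by continuity, the set $V=\{x\in B:U(x)>M-\delta\}$ is relatively open and contains a maximiser. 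Since $B$ is convex with nonempty interior, $V$ has positive Lebesgue measure, so $\int_V p_0>0$ and $Z\ge e^{\beta(M-\delta)}\int_V p_0$.
\item The ratio is therefore at most $C e^{-\beta\delta}\to0$.
\end{enumerate}
This is weak convergence of $q^{\ast}$ to the set $A$. When $A=\{x^{\ast}(p,w)\}$ is a singleton, as for strictly quasiconcave $U$, it gives $q^{\ast}\Rightarrow\delta_{x^{\ast}}$. Because $B$ is bounded, it also gives $\bar x(p,w)\to x^{\ast}(p,w)$, the Walrasian demand. This is the continuous analogue of the limit in Proposition~\ref{prop:freeenergy}, and it yields the closing sentence of the statement.

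The main obstacle is step (iii). The lower bound on $Z$ needs the near-maximal set $V$ to carry positive prior mass. That is where full support of $p_0$ near $A$ is essential, and where the argument differs from the finite-action case. When the maximiser lies on the budget boundary, as under local nonsatiation, I would use convexity of $B$ to place an interior cone at $x^{\ast}$, which ensures $V\cap\operatorname{int}B$ has positive volume. With several maximisers, the limit is only concentration on $A$, with the weights among them determined by $p_0$ and by the local curvature of $U$. The statement claims no more than this.
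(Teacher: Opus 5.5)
Your proof is correct and takes the same three-step route as the paper. The indicator gives feasibility, a positive density on the interior of $B(p,w)$ gives non-degeneracy, and exponential tilting on the compact budget set gives concentration on $\arg\max_{x\in B(p,w)}U(x)$. Where the paper only transplants the finite-action ratio argument of Proposition~\ref{prop:freeenergy} to $B(p,w)$, your $\delta$-separation bound with the lower estimate on $Z$ is the continuous-case argument that transplant actually needs. It also rightly makes explicit that positive prior mass near the maximisers is required for the limit, not only for non-degeneracy, and that with several maximisers ties are split by the prior together with the local behaviour of $U$, not by the prior alone.
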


\begin{proof}[Proof sketch]
Feasibility is the indicator in Definition~\ref{def:cdemand}, which annihilates the density off $B(p,w)$. Non-degeneracy holds because for finite $\beta$ the density is strictly positive wherever $p_0$ is on the relative interior of the compact set $B(p,w)$, so no bundle carries all the mass. The limit restricts Proposition~\ref{prop:freeenergy} to $B(p,w)$: as $\beta\to\infty$ the Gibbs weight concentrates on the maximisers of $U$ over the support, which over $B(p,w)$ are by definition the Walrasian demand, the prior breaking ties and compactness securing attainment.
\end{proof}

The distributional reading already breaks one classical equivalence. Revealed-preference theory asks when observed choices can be rationalised by a single fixed utility; a compressed consumer's realised choices need not be so rationalisable, since across repeated budgets a non-degenerate demand chooses, with positive probability, bundles standing in the cyclic relation the strong axiom forbids, and a single such cycle is incompatible by Afriat's theorem with rationalisation by any utility \citep{afriat1967,samuelson1938,houthakker1950}. The probability of such a violation falls to zero as $\beta\to\infty$, where the choices converge to the Walrasian demand of a fixed utility and are rationalisable. Rationalisability is therefore not a precondition of the analysis but its infinite-capacity limit.

\subsection{The two budgets and the compressed Slutsky decomposition}

The deepest change the channel reading makes to the consumer is to its budget. The classical consumer faces one constraint, the wealth budget $p\cdot x\le w$, whose shadow price is the marginal value of wealth. The compressed consumer faces two. It must also pay for the information by which it resolves its situation, and the second constraint, a ceiling on the rate it can process, is as binding as the first and carries its own price, the inverse capacity $1/\beta$. The first prices goods in money; the second prices attention in bits. The two prices act through the one law, and their joint action is read most cleanly in the consumer's response to a change in price, the object the Slutsky equation organises in the classical theory. That equation has an exact counterpart.

For the price derivative we use the priced, soft-budget form of the compressed demand,
\begin{equation}
q^{\ast}(x)\ \propto\ p_0(x)\,\exp\!\big(\beta[\,U(x)-\lambda\,p\cdot x\,]\big),
\label{eq:priced}
\end{equation}
the Lagrangian dual of the hard-constraint demand of Definition~\ref{def:cdemand}, with $\lambda=\lambda(p,w)$ the multiplier holding the wealth budget binding in expectation, $\E_{q^{\ast}}[p\cdot x]=w$. The priced form is the smooth surrogate of the hard-constraint demand: its support is all of $\R^L_+$ and the budget holds in expectation rather than with certainty, and it is the surrogate on which the price derivative is taken, because the hard budget indicator is not differentiable in $p$. We take this priced law as the object of the differential demand analysis, so that the mean demand $\bar x(p,w)=\E_{q^{\ast}}[x]$ and its covariance are read from it; the priced and hard-constraint laws share neither mean nor covariance at a finite capacity, the two coinciding in the infinite-capacity limit (Lemma~\ref{lem:softconc}), where the hard-constraint law of Definition~\ref{def:cdemand} records that feasibility holds with certainty. The multiplier is strictly positive, $\lambda>0$, exactly when the wealth budget binds from above, that is when the unconstrained Gibbs mean is unaffordable; this holds whenever the value is locally non-satiated, and for a satiable value whenever wealth falls short of the cost of its ideal point. Throughout the price derivative the default $p_0$ is held fixed, as in the single-decision reading in which the rate spent is the relative entropy $\relent(q^{\ast}\,\|\,p_0)$ from the agent's unconditional law; the determination of that law as a fixed point across the distribution of states is a separate matter that does not enter the within-state response to price.

\begin{lemma}[Concentration of the priced law]
\label{lem:softconc}
Fix $(p,w)$ and let $\lambda^{\ast}=\lim_{\beta\to\infty}\lambda(p,w)$. Suppose $U(x)-\lambda^{\ast}\,p\cdot x$ attains its maximum over $\R^L_+$ at a point $x^{\ast}$ interior to $\R^L_+$ with $-\nabla^2U(x^{\ast})$ positive definite. Then as $\beta\to\infty$ the priced law \eqref{eq:priced} concentrates weakly on $x^{\ast}$, and $x^{\ast}$ is the maximiser of $U$ over the budget hyperplane $\{x:p\cdot x=w\}$.
\end{lemma}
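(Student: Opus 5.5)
The plan is a Laplace-type concentration argument for the tilted density, followed by a passage to the limit in the budget condition $\E_{q^{\ast}}[p\cdot x]=w$ to identify $x^{\ast}$.

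Write $\Phi_\beta(x)=U(x)-\lambda_\beta\,p\cdot x$ with $\lambda_\beta=\lambda(p,w)$, and $\Phi^{\ast}(x)=U(x)-\lambda^{\ast}p\cdot x$. The priced law is $q^{\ast}_\beta\propto p_0 e^{\beta\Phi_\beta}$. I will assume, as the setting implicitly requires, that $p_0$ is continuous and positive near $x^{\ast}$ and that $x^{\ast}$ is the \emph{unique} global maximiser of $\Phi^{\ast}$. I also need a mild growth condition making $\int p_0 e^{\beta\Phi^{\ast}}$ finite for large $\beta$, for instance $U$ growing sublinearly so that $-\lambda^{\ast}p\cdot x$ dominates at infinity.

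\textbf{Step 1: reduce to a fixed exponent.} Since $\lambda_\beta\to\lambda^{\ast}$, we have $\beta\Phi_\beta=\beta\Phi^{\ast}-\beta(\lambda_\beta-\lambda^{\ast})p\cdot x$. The perturbation $\beta(\lambda_\beta-\lambda^{\ast})$ need not vanish, so I avoid needing it to. Fix a neighbourhood $N_\delta$ of $x^{\ast}$. Outside it, $\Phi^{\ast}\le\Phi^{\ast}(x^{\ast})-\eta$ for some $\eta>0$, by uniqueness, continuity and the coercivity supplied by the linear price term. For $\beta$ large, $|\lambda_\beta-\lambda^{\ast}|\,p\cdot x\le \eta/4+\epsilon\, p\cdot x$ on the relevant region. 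So $\Phi_\beta$ is still separated by at least $\eta/2$ between $N_\delta^c$ and a smaller ball $N_{\delta'}$ around $x^{\ast}$. The mass ratio $q^{\ast}_\beta(N_\delta^c)/q^{\ast}_\beta(N_{\delta'})$ is then bounded by $e^{-\beta\eta/2}$ times polynomial factors. Positivity of $p_0$ near $x^{\ast}$ bounds the denominator below by $c\,\delta'^L e^{\beta(\Phi_\beta(x^{\ast})-\eta/8)}$, and the growth condition controls the tail integral in the numerator. Hence $q^{\ast}_\beta(N_\delta^c)\to0$ for every $\delta$, which is weak convergence to $\delta_{x^{\ast}}$. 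The Hessian condition is not strictly needed here. It does give strictness of the local maximum, and it gives the Gaussian scaling $\beta^{-1}(-\nabla^2U)^{-1}$ for the covariance, which I would record for later use in Theorem~\ref{thm:slutsky}.

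\textbf{Step 2: identify $x^{\ast}$.} Concentration plus uniform integrability of $p\cdot x$ gives $\E_{q^{\ast}_\beta}[p\cdot x]\to p\cdot x^{\ast}$. Uniform integrability follows from the same exponential tail bound, because the linear price term makes tails exponentially light uniformly in large $\beta$. Since $\E_{q^{\ast}_\beta}[p\cdot x]=w$ for every $\beta$, we get $p\cdot x^{\ast}=w$, so $x^{\ast}$ lies on the budget hyperplane. Interiority gives the first-order condition $\nabla U(x^{\ast})=\lambda^{\ast}p$. Now take any $y$ with $p\cdot y=w$. Global maximality of $\Phi^{\ast}$ gives $U(y)-\lambda^{\ast}w\le U(x^{\ast})-\lambda^{\ast}w$, so $U(y)\le U(x^{\ast})$, and $x^{\ast}$ maximises $U$ on the hyperplane. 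This is the standard Lagrangian sufficiency argument, and it needs no concavity because $x^{\ast}$ is assumed to be a global maximiser over $\R^L_+$.

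\textbf{Main obstacle.} The delicate point is Step 1's handling of the moving multiplier $\lambda_\beta$: the exponent perturbation is $\beta(\lambda_\beta-\lambda^{\ast})p\cdot x$, which can be large. My first attempt is the one above, absorbing it into the fixed gap $\eta$ via the uniform smallness of $\lambda_\beta-\lambda^{\ast}$ on compacts together with the exponential tail dominance. If that fails, I would instead concentrate each $q^{\ast}_\beta$ on the maximiser $x_\beta$ of $\Phi_\beta$. Then $x_\beta\to x^{\ast}$ by the implicit function theorem, which is exactly where $-\nabla^2U(x^{\ast})\succ0$ is used, and concentration on $x_\beta$ yields concentration on $x^{\ast}$.
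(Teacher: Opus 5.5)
Your argument is correct and follows the paper's own route: you concentrate the Gibbs tilt on the maximiser of $U-\lambda^{\ast}p\cdot x$, then pass the in-expectation budget $\E_{q^{\ast}}[p\cdot x]=w$ to the limit to put $x^{\ast}$ on the hyperplane. Your version also supplies what the paper's sketch leaves tacit, namely uniqueness of the maximiser, positivity of $p_0$ near $x^{\ast}$, a tail condition, control of the $\beta$-dependent multiplier $\lambda_\beta$ (the paper simply cites Proposition~\ref{prop:freeenergy}, which treats a fixed exponent), and the uniform integrability needed to take the limit of the mean. Your identification step, via global maximality of $U-\lambda^{\ast}p\cdot x$ over $\R^L_+$ (Lagrangian sufficiency), is sharper than the paper's appeal to the first-order condition $\nabla U(x^{\ast})=\lambda^{\ast}p$, which by itself would not certify a maximiser on the hyperplane without concavity.
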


\begin{proof}[Proof sketch]
The priced law is the Gibbs tilt of $p_0$ by $\beta(U-\lambda\,p\cdot x)$; by Proposition~\ref{prop:freeenergy} in its continuous form, applied with reference measure $p_0$, it concentrates as $\beta\to\infty$ on the maximisers of $U-\lambda^{\ast}p\cdot x$, here the single interior point $x^{\ast}$. The interior stationarity condition $\nabla U(x^{\ast})=\lambda^{\ast}p$ is the first-order condition of $\max\{U(x):p\cdot x=w\}$ with multiplier $\lambda^{\ast}$, and the in-expectation budget $\E_{q^{\ast}}[p\cdot x]=w$ forces $p\cdot x^{\ast}=w$ in the limit, identifying $x^{\ast}$ with the budget-constrained maximiser.
\end{proof}

\begin{theorem}[The compressed Slutsky decomposition]
\label{thm:slutsky}
Let a consumer have compressed demand in the priced form \eqref{eq:priced}, and write $\Sigma=\mathrm{Cov}_{q^{\ast}}(x)$ for the covariance of its choice law and $\bar x(p,w)=\E_{q^{\ast}}[x]$ for its mean demand. Then the response of mean demand to price decomposes as
\[
\frac{\partial\bar x_i}{\partial p_j}
\ =\ S_{ij}\ -\ \beta\,\frac{\partial\lambda}{\partial p_j}\,\mathrm{Cov}_{q^{\ast}}(x_i,\,p\cdot x),
\qquad
S_{ij}\ :=\ -\,\beta\lambda\,\mathrm{Cov}_{q^{\ast}}(x_i,x_j),
\]
in which, when the wealth budget binds from above so that $\lambda>0$, the substitution matrix $S=-\beta\lambda\,\Sigma$ is symmetric and negative semidefinite at every finite capacity, and the second term is the wealth adjustment. In particular the compensated own-price effect is $S_{ii}=-\beta\lambda\,\mathrm{Var}_{q^{\ast}}(x_i)\le0$. As $\beta\to\infty$ the law $q^{\ast}$ concentrates, by Lemma~\ref{lem:softconc}, on the Walrasian demand of the consumer, and the rescaled covariance $\beta\Sigma$ tends to the inverse curvature $(-\nabla^2U)^{-1}$ of the value at the maximiser, so $S=-\beta\lambda\,\Sigma$ tends to the negative-definite matrix $-\lambda(-\nabla^2U)^{-1}$; the classical Slutsky substitution matrix is the restriction of this limit to the budget hyperplane, the rank-one singularity along $p$ supplied by the wealth-adjustment term, so the full decomposition reproduces the classical Slutsky equation \citep{slutsky1915,hicks1939} in the limit.
\end{theorem}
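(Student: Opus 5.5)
The plan is to differentiate the exponential family directly. I write the priced law \eqref{eq:priced} as $q^{\ast}(x)=p_0(x)\exp(\beta[U(x)-\lambda p\cdot x])/Z(p,\lambda)$, where $\lambda=\lambda(p,w)$ is fixed implicitly by $\E_{q^{\ast}}[p\cdot x]=w$. The total dependence on $p_j$ enters only through the exponent $\beta[U(x)-\lambda\,p\cdot x]$, whose derivative in $p_j$ is
\[
-\beta\Big(\lambda x_j+\frac{\partial\lambda}{\partial p_j}\,p\cdot x\Big).
\]
The standard exponential-family identity $\partial_\theta\E_q[f]=\mathrm{Cov}_q(f,\partial_\theta \log \tilde q)$ then applies: the derivative of $\log Z$ only recentres the score and drops out of the covariance. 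With $f=x_i$ this gives
\[
\frac{\partial\bar x_i}{\partial p_j}=-\beta\lambda\,\mathrm{Cov}(x_i,x_j)-\beta\,\frac{\partial\lambda}{\partial p_j}\,\mathrm{Cov}(x_i,p\cdot x),
\]
which is the stated decomposition. Differentiating under the integral needs moments of $q^{\ast}$ to be locally uniformly integrable in $(p,\lambda)$. I will assume this, for instance from $\lambda>0$ and $p\in\R^L_{++}$, since then $e^{-\beta\lambda p\cdot x}$ dominates on $\R^L_+$ provided $U$ grows sublinearly. Differentiability of $\lambda$ follows from the implicit function theorem, because $\partial_\lambda\E[p\cdot x]=-\beta\mathrm{Var}(p\cdot x)<0$.

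The finite-capacity properties are then immediate. $\Sigma$ is a covariance matrix, so it is symmetric and positive semidefinite. Since $\beta>0$ and $\lambda>0$, $S=-\beta\lambda\Sigma$ is symmetric and negative semidefinite, and $S_{ii}=-\beta\lambda\mathrm{Var}(x_i)\le 0$. I will stress that no property of $U$ was used here, only the Gibbs form.

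For the limit I use Lemma~\ref{lem:softconc}: $q^{\ast}$ concentrates on the interior point $x^{\ast}$, and $\lambda\to\lambda^{\ast}$. To get the rate I run a Laplace expansion. I set $\Phi(x)=U(x)-\lambda p\cdot x$ and $H=-\nabla^2U(x^{\ast})\succ0$, and rescale $y=\sqrt\beta(x-x_\beta)$, where $x_\beta$ is the maximiser of $\Phi$ for the current $\lambda$. The law of $y$ has density proportional to $p_0(x_\beta+y/\sqrt\beta)\exp(-\tfrac12 y^\top H y+O(|y|^3/\sqrt\beta))$. This converges to $N(0,H^{-1})$, with continuity of $p_0$ at $x^{\ast}$ and $p_0(x^{\ast})>0$ making the prior factor asymptotically constant. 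Convergence of second moments then gives $\beta\Sigma\to H^{-1}$, and hence $S\to-\lambda^{\ast}H^{-1}$.

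\textbf{Main obstacle.} The hard step is justifying the moment convergence. I would first split the integral into a shrinking ball $|x-x^{\ast}|\le\beta^{-1/3}$ and its complement. On the ball the cubic remainder is $o(1)$ uniformly. Off the ball the strict maximum and the linear decay of $-\lambda p\cdot x$ give exponentially small mass even after multiplying by $|x|^2$. The argument also has to handle $\lambda$ depending on $\beta$, and this is controlled by continuity of $\lambda$ in $\beta$ near $\lambda^{\ast}$.

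Finally, to match the classical Slutsky matrix I will compare $-\lambda^{\ast}H^{-1}$ with the Hicksian matrix $-\lambda^{\ast}\big(H^{-1}-H^{-1}pp^\top H^{-1}/(p^\top H^{-1}p)\big)$ of the bordered-Hessian formula. The difference is a rank-one term along $H^{-1}p$. I will show that the limit of the wealth term $-\beta\,\partial_{p_j}\lambda\,\mathrm{Cov}(x_i,p\cdot x)$ supplies exactly this rank-one correction. Differentiating the budget identity $\E[p\cdot x]=w$ in $p_j$ gives $\beta\,\partial_{p_j}\lambda\,\mathrm{Var}(p\cdot x)=\bar x_j-\beta\lambda\,\mathrm{Cov}(x_j,p\cdot x)$. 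Substituting the limits $\beta\mathrm{Cov}(x,p\cdot x)\to H^{-1}p$ and $\beta\mathrm{Var}(p\cdot x)\to p^\top H^{-1}p$ yields the projection term together with the income term $-\bar x_j\,\partial\bar x_i/\partial w$. This reproduces Slutsky's equation in the limit.
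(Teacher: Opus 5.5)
Your derivation of the decomposition, the symmetry and sign, and the limit $\beta\Sigma\to H^{-1}$ (with $H=-\nabla^2U(x^{\ast})$) is essentially the paper's argument. The appendix writes \eqref{eq:priced} as an exponential family with natural parameter $\eta=-\beta\lambda p$, uses $\nabla^2_\eta\log Z=\Sigma$, and applies the chain rule through $\eta_k=-\beta\lambda p_k$; your score identity is the same computation. It then obtains differentiability of $\lambda$ by the implicit-function theorem and the limit by a Laplace expansion, as you do. You also state the integrability and moment-convergence hypotheses that the paper leaves implicit.

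The real difference is the final clause. The paper only asserts that the classical matrix is the ``restriction'' of $-\lambda^{\ast}H^{-1}$ to the budget hyperplane, with the wealth term supplying the singularity. You actually derive it: you solve the differentiated budget identity for $\beta\,\partial\lambda/\partial p_j$ and split the wealth term into the income effect $-\bar x_j\,\partial\bar x_i/\partial w$ plus the rank-one piece $\lambda^{\ast}H^{-1}pp^{\top}H^{-1}/(p^{\top}H^{-1}p)$. That is the precise content of the paper's phrase. It should not be read as the orthogonal compression of $-\lambda^{\ast}H^{-1}$ onto $p^{\perp}$, which differs from the Hicksian matrix already in the worked example. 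The correction has range spanned by $H^{-1}p$, and it is what creates the null direction $p$.

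Your route also gives more than the limit. Differentiating the budget identity in $w$ gives exactly $\partial\bar x_i/\partial w=\mathrm{Cov}(x_i,p\cdot x)/\mathrm{Var}(p\cdot x)$. The same algebra then yields, at every finite $\beta$,
$\partial\bar x_i/\partial p_j=\tilde S_{ij}-\bar x_j\,\partial\bar x_i/\partial w$ with $\tilde S=-\beta\lambda\big(\Sigma-\Sigma pp^{\top}\Sigma/(p^{\top}\Sigma p)\big)$. For $\lambda>0$ this $\tilde S$ is symmetric and negative semidefinite, and it satisfies $\tilde Sp=0$. That is a finite-capacity Slutsky equation in fully classical form, which the paper's $S=-\beta\lambda\Sigma$, never singular along $p$, does not provide.

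One repair is needed in your Laplace step. On the ball $|x-x^{\ast}|\le\beta^{-1/3}$ the cubic remainder $\beta|x-x^{\ast}|^{3}$ is only bounded, not $o(1)$. There are two easy fixes:
\begin{enumerate}
\item take radius $\beta^{-\alpha}$ with $\tfrac13<\alpha<\tfrac12$; or
\item keep $\beta^{-1/3}$ and note that there $|y|\le\beta^{1/6}$, so the remainder is at most $C\beta^{-1/3}|y|^{2}$. It is then absorbed into $-\tfrac12 y^{\top}Hy$, which gives the Gaussian dominating function your moment convergence needs.
\end{enumerate}
Also centre the ball at $x_\beta$ rather than $x^{\ast}$, since nothing controls the rate at which $x_\beta\to x^{\ast}$.
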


\begin{proof}[Proof sketch]
The law \eqref{eq:priced} is an exponential family in the sufficient statistic $x$ with natural parameter $\eta=-\beta\lambda\,p$ and base measure $p_0(x)e^{\beta U(x)}$. For an exponential family the mean is the gradient of the log-partition function in the natural parameter and its derivative is the covariance, $\partial\E_{q^{\ast}}[x_i]/\partial\eta_k=\mathrm{Cov}_{q^{\ast}}(x_i,x_k)$. Price enters only through $\eta=-\beta\lambda p$ with $\lambda=\lambda(p,w)$, so the chain rule gives $\partial\bar x_i/\partial p_j=\sum_k\mathrm{Cov}_{q^{\ast}}(x_i,x_k)(-\beta\lambda\delta_{kj}-\beta p_k\,\partial\lambda/\partial p_j)$, separating into $S_{ij}=-\beta\lambda\,\mathrm{Cov}_{q^{\ast}}(x_i,x_j)$ and the wealth-adjustment term. Symmetry of $S$ is the symmetry of the covariance; for the sign, $a^{\top}Sa=-\beta\lambda\,\mathrm{Var}_{q^{\ast}}(a\cdot x)\le0$ for any $a$, since $\beta\lambda>0$. The limit is the concentration of $q^{\ast}$ established in Proposition~\ref{prop:demandlimit} together with the Laplace expansion $\beta\Sigma\to(-\nabla^2U(x^{\ast}))^{-1}$. The full proof is in Appendix~\ref{app:proofs}.
\end{proof}

That the compressed substitution matrix is symmetric and negative semidefinite is the heart of the matter. In the classical theory those two properties are the integrability conditions, the signature that a demand system has come from a well-behaved preference; they are guaranteed only for a maximiser of a concave utility. Here they hold for the compressed consumer as properties of a covariance matrix, the covariance of its own choice law, and the proof used no property of $U$ beyond measurability. The integrability conditions are inherited not from the consumer's rationality but from the exponential form its bounded rate imposes on its behaviour, and they survive the replacement of its preference by an arbitrary field. This is the precise sense in which the recovery of demand theory does not depend on the premise the classical theory could not do without.

\begin{proposition}[The compensated law of demand from the budget, not from rationality]
\label{prop:beckerlaw}
The compensated law of demand holds for the compressed consumer whenever the wealth budget binds with a strictly positive shadow price, irrespective of the rationality of its value. For any value field $U$ and prior $p_0$ for which $\lambda>0$, the own-price substitution effect of Theorem~\ref{thm:slutsky} satisfies $S_{ii}=-\beta\lambda\,\mathrm{Var}_{q^{\ast}}(x_i)\le0$, with equality only for a good the consumer never varies; the slope is therefore a property of the budget and the channel, not of the preference, and it survives the failure of completeness, transitivity, and utility-representability. When instead the value is satiable and its ideal point affordable, $\lambda\le0$ and the compensated own-price slope is correspondingly non-negative, a signed prediction the channel reading delivers rather than an exception to it. By a separate argument, for a population choosing by any law supported on the budget set and fixed independently of the price change, a compensated rise in a good's own price contracts the opportunity set and cannot raise the population-mean consumption of that good \citep{becker1962}; this population statement does not subsume the single compressed consumer, whose law $q^{\ast}$ itself moves with the price through $\eta=-\beta\lambda p$.
\end{proposition}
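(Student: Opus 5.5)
The plan is to read the proposition off Theorem~\ref{thm:slutsky} for the single consumer, and to give a separate, elementary coupling argument for the population clause. Each claim about the sign of $S_{ii}$ comes from the exponential-family identity in the proof of Theorem~\ref{thm:slutsky}. That identity used only three facts: the priced law \eqref{eq:priced} is an exponential family in $x$ with natural parameter $\eta=-\beta\lambda p$; its base measure is $p_0(x)e^{\beta U(x)}$; and the log-partition function is finite near the current $\eta$. So the first step is to record explicitly that nothing else about $U$ was invoked. We need $U$ measurable with $\int p_0 e^{\beta(U-\lambda p\cdot x)}\,dx<\infty$ in a neighbourhood of $p$, together with second moments. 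Completeness, transitivity and representability of an underlying preference never enter, because $U$ is used only as a weight on the base measure.

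The second step treats the case $\lambda>0$. Here $S_{ii}=-\beta\lambda\,\mathrm{Var}_{q^{\ast}}(x_i)$, and $\beta\lambda>0$ gives $S_{ii}\le0$. Equality holds iff $\mathrm{Var}_{q^{\ast}}(x_i)=0$, that is, iff $x_i$ is $q^{\ast}$-almost surely constant. This is the sense of a good ``the consumer never varies''. For a prior of full support this cannot occur, since the Gibbs density is then positive on an open set. The third step treats the satiable case, reading the sign of $\lambda$ straight from the same formula. If the unconstrained Gibbs mean (at $\lambda=0$) is affordable, the multiplier that enforces $\E_{q^{\ast}}[p\cdot x]=w$ with equality must be $\le0$. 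This follows because $\E[p\cdot x]$ is strictly decreasing in $\lambda$: its derivative is $-\beta\,\mathrm{Var}_{q^{\ast}}(p\cdot x)<0$, again by the covariance identity. Hence $S_{ii}=-\beta\lambda\mathrm{Var}(x_i)\ge0$. I would state this monotonicity as a one-line lemma, since it also justifies the criterion for $\lambda>0$ asserted before Lemma~\ref{lem:softconc}.

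The population clause is handled separately. Fix a law $\mu$ on bundles that does not depend on the price change. Let the compensated change raise $p_j$ to $p_j'$ and adjust wealth to $w'$ so that the original mean bundle remains affordable. Only bundles in $B(p',w')$ are then available, and the population mean of $x_j$ is taken over choices from the new set. The argument I would try first is Becker's: among bundles on the compensated budget line, those that have more $x_j$ than the original affordable region allows are excluded. The excluded bundles are those with high $x_j$, because rotating the hyperplane around the compensated point cuts off the high-$x_j$ side. So for laws spread uniformly, or more generally laws fixed as a distribution over the budget set after a renormalisation that preserves ordering in $x_j$, the mean of $x_j$ cannot rise.

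The main obstacle is making ``fixed independently of the price change'' precise, since the law must live on a set that moves. I would formalise it as a fixed distribution over relative positions on the budget simplex, for example uniform or a fixed law on budget shares. For that formulation I would verify the comparison by direct integration over the two-good case first. Finally, I would note explicitly, as the statement does, that this clause is not used for $q^{\ast}$, because $q^{\ast}$ moves with $\eta$ and its sign is already given by step two.
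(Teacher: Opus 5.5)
Your treatment of the main clause matches the paper's. You read $S_{ii}=-\beta\lambda\,\mathrm{Var}_{q^{\ast}}(x_i)$ off the diagonal of Theorem~\ref{thm:slutsky}, with equality iff $x_i$ is $q^{\ast}$-a.s.\ constant. Your proviso that the log-partition function be finite near the current $\eta$ is a more accurate version of the appendix's ``nothing beyond measurability''. The real differences are in the two side clauses, and the appendix proves neither of them rigorously.

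\textbf{Satiable clause.} The appendix gives no argument for it. Your lemma $\partial_\lambda\E_{q^{\ast}}[p\cdot x]=-\beta\,\mathrm{Var}_{q^{\ast}}(p\cdot x)<0$ is the right tool, and it also certifies the criterion for $\lambda>0$ stated before Lemma~\ref{lem:softconc}. However, you have silently replaced ``its ideal point affordable'' by ``the $\lambda=0$ Gibbs mean is affordable''. You should say so explicitly, because the replacement is forced: as written, the clause is false at finite $\beta$. Here is a counterexample:
\begin{itemize}
\item take $L=1$, $p=1$, $\beta=1$, flat $p_0$ on $\R_+$, $U(x)=-\tfrac12(x-1)^2$, $w=1.1$;
\item the ideal point costs $1<w$;
\item but the $\lambda=0$ law is $\mathcal N(1,1)$ truncated to $\R_+$, with mean $1+\phi(1)/\Phi(1)\approx1.29>w$;
\item so your own lemma forces $\lambda>0$ and $S_{11}<0$.
\end{itemize}
The two hypotheses coincide in the untruncated quadratic of Section~\ref{sec:worked}. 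They also coincide asymptotically in $\beta$ when the ideal point costs strictly less than $w$. In general only your version is provable. One further condition: the decreasing map $\lambda\mapsto\E_{q^{\ast}}[p\cdot x]$ must actually reach $w$ on the range of $\lambda$ where $Z$ is finite. If it does not, the budget is slack and the convention $\lambda=0$ still gives the claimed sign.

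\textbf{Population clause.} The appendix argues that the new set ``contracts in $x_i$'', so the mean cannot rise. This is only heuristic. A literally fixed law would have to sit in $B(p,w)\cap B(p',w')$, where the claim is vacuous. So ``fixed'' must mean a fixed rule for spreading mass over whichever set is available, and contraction of the set does not by itself control the mean under such a rule. You identified exactly this obstacle, and your fixed-budget-share formalisation is the right repair. It closes in one line for every $L$, so you do not need to start from a two-good integration. Let $x_k=w s_k/p_k$ with $s$ drawn from a fixed law on $\{s\ge0,\ \sum_k s_k\le1\}$, so that $\bar x_i=w\bar s_i/p_i$. For $p_i'=p_i+\Delta$, $\Delta>0$, with Slutsky compensation $w'=w+\Delta\bar x_i$,
\[
\frac{\bar x_i'}{\bar x_i}=\frac{p_i+\Delta\,\bar s_i}{p_i+\Delta}\le1,
\]
since $\bar s_i\le1$. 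Any smaller compensation that keeps $\bar x$ affordable only lowers $\bar x_i'$ further. Uniform laws on the budget set or the budget line are special cases, because $x\mapsto s$ is linear. You should drop the looser variant with an ``order-preserving renormalisation''. A monotone map from the old to the new range of $x_i$ need not send $0$ to $0$, so it can lift mass sitting near $x_i=0$ and raise the mean.
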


\begin{proof}[Proof sketch]
The inequality $S_{ii}\le0$ is the diagonal of the negative semidefinite matrix of Theorem~\ref{thm:slutsky}, whose proof used no property of $U$ beyond measurability and so holds verbatim when $U$ is the realised value of any field; equality requires zero variance, a good held fixed. The population statement is the budget-set monotonicity of \citet{becker1962}: a compensated own-price rise rotates the opportunity set about the chosen bundle, shrinking the portion in which the dearer good is the more abundant, and the mean of a family of laws supported on the set and fixed independently of the price cannot rise in that coordinate as the set contracts.
\end{proof}

The classical theory derived the downward slope of compensated demand from the consumer's rationality, through the negativity of a substitution matrix that only a maximiser of a well-behaved utility was guaranteed to possess. The compressed theory derives it from the budget. The opportunity set contracts when a price rises, and the contraction moves the mass of any choice law, rational or not, off the good that has become dearer; the maximisation was never doing the work, in signing the compensated slope, that the constraint was doing. The uncompensated slope is another matter, retaining the wealth term of Theorem~\ref{thm:slutsky} and unsigned, so a compressed consumer can display a Giffen good exactly as a rational one can. What the budget secures without rationality is the compensated law alone.

\subsection{Utility under adaptation}

One further classical fixity is loosened by the channel reading, and it bears on the interpretation of $U$. The classical consumer maximises a utility given in advance and held fixed, the discipline's methodological charter having made the fixity a principle that assigns all variation in conduct to prices and none to taste \citep{stiglerbecker1977}. The compressed consumer's value is more naturally read as a learned object, formed by experience and equal to the fixed utility of the textbooks only when the learning has converged. Holding tastes fixed is then the assertion that the consumer's learning has already settled, a legitimate idealisation for an environment stationary long enough, and the general case is the consumer whose environment changes faster than its value can settle. The results above are stated for a fixed $U$, the converged case; the path-dependent case, in which the value carries the history of the consumer's experience, is left to separate work and does not bear on the nesting, which is a statement about the limit in $\beta$ at a fixed value.

\section{A worked example: a two-good quadratic consumer}
\label{sec:worked}

The recovery of the previous section is a statement about a limit in capacity. This section makes it concrete in a single consumer small enough to carry every quantity in closed form, yet rich enough to exhibit the compressed demand, the compressed Slutsky matrix and its symmetry and sign, the infinite-capacity recovery of the Walrasian substitution matrix, the zero-capacity collapse to habit, and the Gaussian rate-distortion frontier, all at once. The vehicle is a two-good quadratic consumer. Every number below has been computed in closed form and verified numerically.

\subsection*{The consumer and its mean demand}

A consumer chooses a bundle $x=(x_1,x_2)\in\R^2_+$ and places on it the quadratic value
\begin{equation}
\label{eq:value}
U(x)\ =\ a\cdot x-\tfrac12\,x^{\top}Q\,x,
\qquad
a=\begin{pmatrix}4\\3\end{pmatrix},\quad
Q=\begin{pmatrix}2&\tfrac12\\[2pt]\tfrac12&1\end{pmatrix},
\end{equation}
with $Q$ symmetric and positive definite, so $U$ is strictly concave and satiable; the curvature matrix $-\nabla^2U=Q$ has eigenvalues $0.7929$ and $2.2071$, both positive. Prices are $p=(1,1)$ and wealth is $w=3$. The consumer is a channel of capacity $\beta$, so by Definition~\ref{def:cdemand} it does not solve the constrained maximisation directly; it draws its bundle from the compressed demand. Take a flat prior $p_0$, the maximum-entropy prior on the plane. In the priced form \eqref{eq:priced} the argument of the exponential is, up to an additive constant,
\begin{equation}
\label{eq:complete}
U(x)-\lambda\,p\cdot x
=-\tfrac12\big(x-\mu\big)^{\top}Q\big(x-\mu\big)+\text{const},
\qquad \mu=Q^{-1}(a-\lambda p),
\end{equation}
so the compressed demand is the Gaussian
\begin{equation}
\label{eq:gauss}
q^{\ast}(x)\ =\ \mathcal N\!\big(\mu,\ \Sigma\big),
\qquad
\mu=Q^{-1}(a-\lambda p),
\qquad
\Sigma=(\beta Q)^{-1}.
\end{equation}
Its mean is the constrained best response $\bar x=\mu$, and the multiplier $\lambda$ is set to hold the budget binding in expectation, $p\cdot\bar x=w$. Since $\bar x=Q^{-1}(a-\lambda p)$, the budget condition is linear in $\lambda$ and gives
\begin{multline}
\label{eq:lambda}
\lambda=\frac{p^{\top}Q^{-1}a-w}{p^{\top}Q^{-1}p}=\frac{3.7143-3}{1.1429}=0.625,\\
\bar x=\begin{pmatrix}1.25\\1.75\end{pmatrix},
\qquad p\cdot\bar x=3=w.
\end{multline}
The unconstrained bliss point $Q^{-1}a=(1.4286,2.2857)$ costs $p\cdot Q^{-1}a=3.7143>w$, so the budget binds and the multiplier is strictly positive, as required. The mean demand $(1.25,1.75)$ is the value the infinite-capacity limit must reproduce; the variance around it is the cost of finite capacity.

\subsection*{(a) The compressed Slutsky matrix: symmetry and sign}

By Theorem~\ref{thm:slutsky} the substitution matrix is $S=-\beta\lambda\,\Sigma$ with $\Sigma=(\beta Q)^{-1}$. The factor of $\beta$ cancels, an exact feature of the Gaussian case that makes the matrix available in closed form:
\begin{multline}
\label{eq:Sclosed}
S=-\beta\lambda\,(\beta Q)^{-1}=-\lambda\,Q^{-1}\\
=-0.625\begin{pmatrix}\phantom{-}0.5714&-0.2857\\-0.2857&\phantom{-}1.1429\end{pmatrix}
=\begin{pmatrix}-0.3571&\phantom{-}0.1786\\\phantom{-}0.1786&-0.7143\end{pmatrix}.
\end{multline}
The matrix is manifestly symmetric, its off-diagonal entries equal at $0.1786$, the symmetry inherited from the covariance and not assumed. Its eigenvalues are $-0.7883$ and $-0.2832$, both strictly negative, so $S$ is negative definite, hence negative semidefinite, as Theorem~\ref{thm:slutsky} requires. The compensated own-price effects are the diagonal entries, $S_{11}=-\beta\lambda\,\mathrm{Var}(x_1)=-0.3571$ and $S_{22}=-\beta\lambda\,\mathrm{Var}(x_2)=-0.7143$, both negative: a compensated rise in a good's own price lowers its mean consumption, the compensated law of demand, here a property of the second moment of the choice law. None of these signs used any property of $U$ beyond its curvature; replacing $U$ by any value field with the same second-order behaviour about the chosen bundle leaves the symmetry and the sign of $S$ intact, the content of Proposition~\ref{prop:beckerlaw}.

That the matrix in \eqref{eq:Sclosed} is independent of $\beta$ is a special exact feature of the Gaussian, in which the mean and the multiplier do not move with capacity; what does move with capacity is the dispersion of the choice law about that mean, reported in Table~\ref{tab:converge}. In a non-Gaussian consumer the mean demand and the multiplier vary with $\beta$, and the matrix approaches its limit \eqref{eq:Slim} only as $\beta\to\infty$; the quadratic affords the closed form in which the limit is already attained.

\subsection*{(b) The infinite-capacity limit recovers Walrasian demand}

As $\beta\to\infty$ the variance $\Sigma=(\beta Q)^{-1}\to0$, so by Proposition~\ref{prop:demandlimit} the compressed demand \eqref{eq:gauss} concentrates, weakly, on the point mass at the constrained best response,
\begin{equation}
\label{eq:agentlimit}
q^{\ast}\ \xrightarrow{\ \beta\to\infty\ }\ \delta_{\,\bar x},
\qquad \bar x=(1.25,1.75),
\end{equation}
which is exactly the Walrasian demand of the consumer computed directly at \eqref{eq:lambda}: the bundle that maximises $U$ on the budget set. The substitution matrix tends, by Theorem~\ref{thm:slutsky}, to
\begin{equation}
\label{eq:Slim}
S\ \xrightarrow{\ \beta\to\infty\ }\ -\lambda\,(-\nabla^2U)^{-1}=-\lambda\,Q^{-1}
=\begin{pmatrix}-0.3571&\phantom{-}0.1786\\\phantom{-}0.1786&-0.7143\end{pmatrix},
\end{equation}
the Hicksian substitution matrix derived in the classical theory from the second-order conditions of constrained maximisation \citep{hicks1939}. The classical demand theory, the maximising bundle and its negative-definite substitution matrix alike, is recovered term for term as the infinite-capacity corner of the compressed consumer, and not by stipulation: the limit removed the only source of randomness, the compression noise of variance $\Sigma$, and what is left is the deterministic choice the canon would have written down.

\subsection*{(c) The zero-capacity limit collapses to habit}

The other extreme is the consumer of pure habit. To exhibit it we give the consumer a proper prior, $p_0=\mathcal N(m_0,I)$ with $m_0=(1,1)$, in place of the flat prior, so that as capacity vanishes the choice has somewhere to fall back to. The compressed demand is then Gaussian with precision $\beta Q+I$ and a mean that interpolates between the prior centre and the value-driven best response. Holding the wealth shadow price at its reference value $\lambda=0.625$ to isolate the capacity effect, the mean demand traces
\begin{equation}
\label{eq:priorcollapse}
\bar x(\beta)\ \xrightarrow{\ \beta\to0\ }\ m_0=(1,1),
\end{equation}
falling from $(1.256,1.598)$ at $\beta=4$ through $(1.132,1.162)$ at $\beta=\tfrac14$ to the prior centre $(1,1)$ as $\beta\to0$ (Table~\ref{tab:converge}). At zero capacity the consumer can afford no departure from its prior and simply enacts its default: it is the creature of habit of Definition~\ref{p:agent}, condition $\beta\to0$. The two limits of the one parameter are therefore the two consumers economics has always known, the perfect optimiser and the pure creature of habit, recovered as the endpoints of a single capacity axis with every real consumer in between.

\subsection*{(d) The rate-distortion frontier the consumer sits on}

The capacity the consumer spends is priced on the Gaussian rate-distortion frontier \eqref{eq:gaussianrd}. Reading the relevant state as a standardised mode of unit variance, $\sigma^2=1$, the least rate compatible with a reproduction error $D$ is $\rdfun(D)=\tfrac12\log_2(1/D)$ bits, the convex decreasing curve of Table~\ref{tab:rd}: each halving of the tolerated error from $D=\tfrac12$ downward costs a fixed further half-bit, from $0.5$ bits at $D=\tfrac12$ to $2.0$ bits at $D=\tfrac1{16}$. This is the budget line of the second constraint of the consumer's problem, drawn in bits rather than money: it says how much capacity the consumer must buy to resolve its situation to a given fidelity, and the inverse temperature $\beta$ of \eqref{eq:gauss} is the price at which it buys, the shadow value of the rate. The figure summarises the two limits and the frontier together.

\begin{table}[t]
\centering
\footnotesize
\setlength{\tabcolsep}{2pt}
\begin{tabular}{@{}r *{4}{>{\centering\arraybackslash}p{0.218\textwidth}}@{}}
\toprule
$\beta$ & mean $\bar x$ (flat prior) & dispersion $\sqrt{\mathrm{Var}(x_1)}$ &
$S$ eigenvalues & mean $\bar x$ ($\beta\to0$ prior) \\
\midrule
$4$    & $(1.2500,1.7500)$ & $0.3780$ & $(-0.7883,-0.2832)$ & $(1.2561,1.5976)$\\
$16$   & $(1.2500,1.7500)$ & $0.1890$ & $(-0.7883,-0.2832)$ & $(1.2535,1.7042)$\\
$64$   & $(1.2500,1.7500)$ & $0.0945$ & $(-0.7883,-0.2832)$ & $(1.2511,1.7379)$\\
$\tfrac14$ & $(1.2500,1.7500)$ & $1.5119$ & $(-0.7883,-0.2832)$ & $(1.1324,1.1618)$\\
\midrule
$\beta\to\infty$ & $(1.2500,1.7500)$ & $0$ & $(-0.7883,-0.2832)$ & Walrasian $\delta_{\bar x}$\\
$\beta\to0$      & undefined & $\infty$ & undefined & habit $m_0=(1,1)$\\
\bottomrule
\end{tabular}
\caption{The capacity axis of the two-good quadratic consumer ($a=(4,3)$,
$Q=\big(\begin{smallmatrix}2&.5\\.5&1\end{smallmatrix}\big)$, $p=(1,1)$, $w=3$).
With a flat prior the mean demand is the Walrasian bundle $(1.25,1.75)$ at every
finite capacity, since the Gaussian mean is the constrained best response, and the
substitution matrix $S=-\lambda Q^{-1}$ has the fixed eigenvalues $(-0.7883,-0.2832)$,
symmetric and negative definite throughout; the dispersion $\sqrt{\mathrm{Var}(x_1)}=
\sqrt{(Q^{-1})_{11}/\beta}=\sqrt{4/(7\beta)}$ falls to zero as $\beta\to\infty$ (the
infinite-capacity, Walrasian corner) and grows without bound as $\beta\to0$. The last column, computed with a
proper prior $p_0=\mathcal N(m_0,I)$, $m_0=(1,1)$, and the wealth price held at its
reference value to isolate the capacity effect, shows the mean falling to the prior
centre as $\beta\to0$, the collapse to habit.}
\label{tab:converge}
\end{table}

\begin{table}[t]
\centering
\small
\begin{tabular}{rcc}
\toprule
distortion $D$ & rate $\rdfun(D)$ (bits) & rate $\rdfun(D)$ (nats)\\
\midrule
$1.0000$ & $0.0000$ & $0.0000$\\
$0.5000$ & $0.5000$ & $0.3466$\\
$0.2500$ & $1.0000$ & $0.6931$\\
$0.1250$ & $1.5000$ & $1.0397$\\
$0.0625$ & $2.0000$ & $1.3863$\\
\bottomrule
\end{tabular}
\caption{The Gaussian rate-distortion frontier $\rdfun(D)=\tfrac12\log(\sigma^2/D)$
of \eqref{eq:gaussianrd} at $\sigma^2=1$, the consumer's information budget line. The
function is convex and decreasing: each halving of the tolerated error $D$ from
$\tfrac12$ downward costs a fixed further half-bit. The inverse temperature $\beta$ of
the compressed demand \eqref{eq:gauss} is the price at which the consumer buys position
on this curve.}
\label{tab:rd}
\end{table}

\begin{figure}[t]
\centering
\begin{tikzpicture}[>=stealth,scale=1.0]
% axes
\draw[->,thick] (0,0) -- (6.4,0) node[right]{$D$ (distortion)};
\draw[->,thick] (0,0) -- (0,4.3) node[above]{$\rdfun(D)$ (bits)};
% rate-distortion curve R = 1/2 log2(1/D), scaled: x = 6*D, y = 2*R for D in (0.06,1]
% points (D, R): (1,0)(.5,.5)(.25,1)(.125,1.5)(.0625,2)
\draw[thick,blue,domain=0.0625:1,samples=80,smooth,variable=\d]
  plot ({6*\d},{2*(0.5*log2(1/\d))});
\node[blue] at (5.2,3.4) {$\rdfun(D)=\tfrac12\log_2\tfrac1D$};
% mark the tabulated points
\foreach \d/\r in {1/0,0.5/0.5,0.25/1,0.125/1.5,0.0625/2}
  \fill[blue] ({6*\d},{2*\r}) circle (1.6pt);
\node[below] at ({6*1},0) {\scriptsize $1$};
\node[below] at ({6*0.5},0) {\scriptsize $\tfrac12$};
\node[below] at ({6*0.25},0) {\scriptsize $\tfrac14$};
\node[left] at (0,{2*0.5}) {\scriptsize $0.5$};
\node[left] at (0,{2*1}) {\scriptsize $1$};
\node[left] at (0,{2*2}) {\scriptsize $2$};
% annotate the two ends
\draw[<-,gray] ({6*0.0625},{2*2}) -- ++(0.2,0.5)
  node[right,gray,font=\scriptsize,align=left]{$D\downarrow0$:\\ $\beta\to\infty$, Walrasian};
\draw[<-,gray] ({6*1},{2*0}) -- ++(-0.1,0.6)
  node[above,gray,font=\scriptsize,align=center]{$D=\sigma^2$:\\ $\beta\to0$, habit};
\end{tikzpicture}
\caption{The consumer's information budget line. The Gaussian rate-distortion frontier
$\rdfun(D)=\tfrac12\log(\sigma^2/D)$ at $\sigma^2=1$ is convex and decreasing; the marked
points are those of Table~\ref{tab:rd}. The high-fidelity end $D\downarrow0$, where the
consumer buys unbounded rate, is the infinite-capacity corner at which Walrasian demand
is recovered; the no-resolution end $D=\sigma^2$, where the consumer buys no bits, is the
zero-capacity corner of pure habit.}
\label{fig:rd}
\end{figure}

\begin{remark}
The example is deliberately Gaussian, which makes the compressed demand a Gaussian, the substitution matrix available as $-\lambda Q^{-1}$ in closed form, and the mean demand independent of the capacity, so that the infinite-capacity Walrasian substitution matrix is already attained at every finite $\beta$. None of the structural content depends on that special feature. Replacing the quadratic value by any strictly concave value with a well-separated maximiser on the budget set leaves the construction intact, provided the budget binds from above so that $\lambda>0$: the compressed demand is non-degenerate at finite capacity, its substitution matrix $S=-\beta\lambda\,\Sigma$ stays symmetric and negative semidefinite by Theorem~\ref{thm:slutsky}, and the limit $\beta\to\infty$ delivers the Walrasian demand and the Hicksian substitution matrix of that value, the rescaled covariance $\beta\Sigma$ converging to the inverse curvature at the maximiser. Unlike the Gaussian, whose mean demand is independent of the capacity, a non-quadratic value has a mean demand and a multiplier that move with the capacity and reach the Walrasian point only in the limit. What the quadratic buys is only that every number can be written down.
\end{remark}

\section{Related literature}
\label{sec:relation}

The bounded consumer of this paper stands at the meeting of several literatures, and the contribution is a reading rather than an estimator: that each of these programmes is a face of a single nesting, in which the classical maximiser is the infinite-capacity corner of a finite channel.

The agent's optimisation is the rational-inattention problem of \citet{sims2003,sims2010}, in which an agent processing information at a finite Shannon rate chooses as if maximising attainable value net of a mutual-information cost; the literature is surveyed in \citet{mackowiak2023}. The Gibbs law of Definition~\ref{p:agent} is the solution of exactly that problem, and the multinomial-logit form it takes in a discrete choice is the result of \citet{matejkamckay2015}, who derive the prior-weighted logit, its weights the agent's own unconditional choice probabilities rather than free coefficients, from the same variational principle; that this behaviour has firm observable content, recoverable from the way choices vary with the state, is the contribution of \citet{caplindean2015}. What the present reading adds is not the agent's optimisation, which it inherits, but the demand-theoretic consequence: that the entire integrability apparatus of classical demand, the symmetry and negativity of the substitution matrix, is recovered from the exponential form the capacity constraint imposes, and recovered without the rationality of the consumer (Theorem~\ref{thm:slutsky}, Proposition~\ref{prop:beckerlaw}).

The stochastic form of the choice law connects the construction to the random-utility tradition. The Gibbs law \eqref{eq:gibbs0} over a finite menu is precisely the choice probability when the systematic value is perturbed by independent extreme-value noise, the multinomial logit of \citet{mcfadden1974}, the prior entering as an additive log-offset. The bounded consumer is therefore a random-utility agent, genuinely stochastic at every finite capacity and deterministic only in the limit; and where the random-utility model is regular, the channel reading locates that regularity as the logit corner of the wider space of stochastic-choice rules, in which the intransitive and incomplete fields of \citet{thurstone1927} and \citet{luce1959} also live. The cost the agent pays for information is by now the standard primitive of the inattention literature, adopted here as such while recording that its claim to be the uniquely correct cost of attention is contested, and that costs tied to the perceptual difficulty of distinctions rather than to information content alone remain a live alternative the theory should not foreclose \citep{mackowiak2023}.

The rationality-free reading of the compensated law is in the spirit of \citet{becker1962}, who showed that a compensated own-price rise lowers mean consumption for a population choosing irrationally on a budget set, the slope a property of the constraint and not of the agent. Theorem~\ref{thm:slutsky} sharpens that observation to the full substitution matrix of a single channel and ties it to a specific behavioural primitive, the Gibbs law, of which the rationalisable consumer is the infinite-capacity limit. Where \citet{aguiarserrano2017} measure the distance of an observed Slutsky matrix from rationality, sizing the violation of symmetry and of the compensated law, the channel substitution matrix here is symmetric and negative semidefinite by construction, a rescaled covariance, so the compensated block sits at zero violation by the exponential form whenever the budget binds from above, and what their norm would measure is carried by the wealth term and the dispersion of the choice law. The same step bears on the methodological charter of \citet{stiglerbecker1977}: holding tastes fixed is recovered as the assertion that the consumer's value has finished adapting, a special case rather than a postulate. Where the consumer population is heterogeneous in scale, the idiosyncratic does not wash out, and the largest members transmit their attention and their errors to the aggregate by the granularity logic of \citet{gabaix2011}, so the aggregate of compressed choice is not a representative household but a structured process. That the standard revealed-preference tests can fail for a compressed consumer at the individual level \citep{samuelson1938,houthakker1950,afriat1967} while the aggregate remains legible is the demand-theoretic face of the same point.

The substrate-neutral reading places the construction beside the machine-learning literature on bounded representation. The information bottleneck of \citet{tishbypereirabialek1999} is the rate-distortion problem of Definition~\ref{def:channel} with a learned distortion, realised operationally in \citet{tishbyzaslavsky2015} and \citet{alemi2017}; the rate-distortion reading of reinforcement learning is \citet{arumugamvanroy2021}, and the capacity-limited policy of \citet{laigershman2021} reproduces the same stochastic, habit-anchored behaviour. The free-energy and stochastic-control readings of the Gibbs law, in \citet{ortegabraun2013}, \citet{todorov2009}, and \citet{kappen2012}, and the abstraction-forming reading of \citet{genewein2015}, supply the dynamic and cognitive faces of the one functional. The synthesis is the point: the consumer of demand theory, the inattentive agent of macroeconomics, the random-utility agent of discrete choice, and the capacity-limited learner of machine learning are one object at different budgets, and the classical maximiser is the corner each of them approaches as its capacity grows without bound.

\section{Conclusion}
\label{sec:conclusion}

The consumer of the textbooks chooses a point and is rational by axiom, and on that axiom the observable content of demand theory rests. This paper has put a different consumer beneath the same theory: a finite-capacity information channel whose choice is a law, the Gibbs distribution that solves its information-constrained problem, with an inverse temperature that is the shadow price of information. The classical consumer is not refuted but located. It is the channel at infinite capacity, the frictionless corner at which the price of thought has fallen to zero, recovered there exactly, the agent-level counterpart of the recovery of competitive equilibrium as a zero-novelty rest state; at the opposite corner, zero capacity, the same consumer is pure habit, acting on its prior alone; and between the two lies every consumer that has ever chosen.

The technical content of the recovery is the compressed Slutsky matrix $S=-\beta\lambda\Sigma$, symmetric and negative semidefinite at every finite capacity because it is a multiple of a covariance, and tending to the Hicksian substitution matrix as capacity diverges. The two properties classical theory reads as the signature of rational maximisation are present in the bounded consumer as properties of the second moment of its own behaviour, and they hold for any value the budget binds from above, so the compensated law of demand is a property of the budget and the channel and not of rationality. The maximisation, in signing the compensated slope, was never doing the work the constraint was doing. The construction extends without change to a machine running a capacity-limited policy, an artificial agent being not a new kind of participant but the same channel at a capacity its designers chose. What the bounded consumer carries away from the infinite-capacity corner, a demand that is a distribution, a budget that prices attention, and behaviour that need not be rationalisable, is the content the classical primitive, which knows only the maximising point, cannot express. The maximiser is the still, frictionless, infinitely sharp corner of a chooser whose nature is to compress, to err, and to economise on thought.

\appendix
\section{Proofs}
\label{app:proofs}

This appendix gives the proofs in fuller form than the sketches of the main text.

\subsection*{Proof of Proposition~\ref{prop:freeenergy}}
Fix a finite action set $\{a_1,\dots,a_m\}$ and write $U_j=U(a_j)$, $p_{0,j}=p_0(a_j)>0$. The objective is $J(p)=\sum_j p_jU_j-\tfrac1\beta\sum_j p_j\log(p_j/p_{0,j})$ over the simplex $\{p\ge0,\sum_j p_j=1\}$. The map $p\mapsto\sum_j p_j\log(p_j/p_{0,j})$ is strictly convex, being a sum of strictly convex perspective terms, and $\sum_j p_jU_j$ is linear, so $J$ is strictly concave and attains a unique maximiser, interior because the entropy term has infinite slope at the boundary of the simplex. A multiplier $\nu$ for the constraint $\sum_j p_j=1$ gives the stationarity condition
\[
U_j-\tfrac1\beta\Big(\log\frac{p_j}{p_{0,j}}+1\Big)-\nu=0,
\]
whence $\log(p_j/p_{0,j})=\beta U_j-\beta\nu-1$ and $p_j\propto p_{0,j}e^{\beta U_j}$; normalising fixes the partition function $Z=\sum_j p_{0,j}e^{\beta U_j}$ and returns $p^{\ast}$. Substituting $p^{\ast}$ into $J$ and using $\log(p^{\ast}_j/p_{0,j})=\beta U_j-\log Z$ together with $\sum_j p^{\ast}_j=1$, the value and entropy contributions cancel to leave $J(p^{\ast})=\tfrac1\beta\log Z$. For the limits, write $p^{\ast}_j\propto p_{0,j}e^{\beta U_j}$ and let $U^{\star}=\max_j U_j$ with maximising set $A^{\star}$: for $a_j\notin A^{\star}$ the ratio $p^{\ast}_j/\sum_{k\in A^{\star}}p^{\ast}_k=(p_{0,j}/\sum_{k\in A^{\star}}p_{0,k})e^{-\beta(U^{\star}-U_j)}\to0$ as $\beta\to\infty$, so the mass concentrates on $A^{\star}$ with the prior breaking ties within it; as $\beta\to0$ the factor $e^{\beta U_j}\to1$ and $p^{\ast}_j\to p_{0,j}$. \qed

\subsection*{Proof of Proposition~\ref{prop:demandlimit}}
Feasibility is immediate, the indicator $\one\{x\in B(p,w)\}$ in Definition~\ref{def:cdemand} setting the density to zero off the budget set, which therefore carries the whole mass. For non-degeneracy, the budget set $B(p,w)$ is compact with non-empty relative interior, and for finite $\beta$ the density $Z^{-1}p_0(x)e^{\beta U(x)}$ is continuous and strictly positive on that interior wherever $p_0$ is, so the law assigns positive probability to every relatively open subset and no single bundle carries all the mass. For the limit, $B(p,w)$ is compact and $U$ continuous, so $\arg\max_{x\in B(p,w)}U(x)$ is non-empty; the argument of Proposition~\ref{prop:freeenergy}, applied with the support restricted to $B(p,w)$, gives weak concentration of $q^{\ast}$ on that argmax set as $\beta\to\infty$, the prior breaking ties. When $U$ represents the consumer's preference, the maximiser of $U$ over $B(p,w)$ is by definition the Walrasian demand, so the compressed demand concentrates on it. \qed

\subsection*{Proof of Theorem~\ref{thm:slutsky}}
Write the priced law \eqref{eq:priced} as $q^{\ast}(x)=Z(\eta)^{-1}h(x)\exp(\eta^{\top}x)$ with base measure $h(x)=p_0(x)e^{\beta U(x)}$, the default $p_0$ held fixed under the price change, and natural parameter $\eta=-\beta\lambda\,p$. This is an exponential family in the sufficient statistic $x$. For such a family the mean is the gradient of the log-partition function, $\bar x=\nabla_\eta\log Z(\eta)$, and the Hessian of $\log Z$ is the covariance, so
\[
\frac{\partial\bar x_i}{\partial\eta_k}=\frac{\partial^2\log Z}{\partial\eta_i\,\partial\eta_k}=\mathrm{Cov}_{q^{\ast}}(x_i,x_k).
\]
Price enters the law only through $\eta=-\beta\lambda(p,w)\,p$, with $\lambda(p,w)$ determined by the in-expectation budget $\E_{q^{\ast}}[p\cdot x]=w$. Differentiating $\eta_k=-\beta\lambda p_k$ in $p_j$ gives $\partial\eta_k/\partial p_j=-\beta\lambda\,\delta_{kj}-\beta p_k\,\partial\lambda/\partial p_j$, and the chain rule yields
\begin{multline*}
\frac{\partial\bar x_i}{\partial p_j}
=\sum_k\mathrm{Cov}_{q^{\ast}}(x_i,x_k)\Big(-\beta\lambda\,\delta_{kj}-\beta p_k\frac{\partial\lambda}{\partial p_j}\Big)\\
=\underbrace{-\beta\lambda\,\mathrm{Cov}_{q^{\ast}}(x_i,x_j)}_{S_{ij}}
-\beta\frac{\partial\lambda}{\partial p_j}\,\mathrm{Cov}_{q^{\ast}}(x_i,\,p\cdot x),
\end{multline*}
using $\sum_k p_k\,\mathrm{Cov}_{q^{\ast}}(x_i,x_k)=\mathrm{Cov}_{q^{\ast}}(x_i,p\cdot x)$ in the second term. Thus $S=-\beta\lambda\,\Sigma$ with $\Sigma=\mathrm{Cov}_{q^{\ast}}(x)$. Symmetry of $S$ is the symmetry of $\Sigma$; for the sign, when the budget binds from above so that $\lambda>0$, any $a\in\R^L$ gives $a^{\top}Sa=-\beta\lambda\,a^{\top}\Sigma a=-\beta\lambda\,\mathrm{Var}_{q^{\ast}}(a\cdot x)\le0$ since $\beta\lambda>0$ and a variance is non-negative, and the own-price case $a=e_i$ gives $S_{ii}=-\beta\lambda\,\mathrm{Var}_{q^{\ast}}(x_i)\le0$. The derivation used no property of $U$ beyond the measurability that makes $h(x)$ a base measure. Differentiability of $\lambda(p,w)$ in $p$, used in the decomposition, follows from the implicit-function theorem applied to $\E_{q^{\ast}}[p\cdot x]-w=0$, whose derivative in $\lambda$ is $-\beta\,\mathrm{Var}_{q^{\ast}}(p\cdot x)<0$ by the full support of the priced law. For the limit, assume the constrained maximiser $x^{\ast}$ is interior to $\R^L_+$ with $-\nabla^2U(x^{\ast})$ positive definite, and note $\lambda(p,w)\to\lambda^{\ast}$ as $\beta\to\infty$, so the linear term $\lambda\,p\cdot x$ has vanishing Hessian and the curvature of the smooth exponent $\beta(U-\lambda\,p\cdot x)$ at the maximiser is $-\beta\nabla^2U(x^{\ast})$. Lemma~\ref{lem:softconc} gives $q^{\ast}\to\delta_{x^{\ast}}$, and the Laplace expansion gives $q^{\ast}$ asymptotically Gaussian with covariance $\Sigma\sim(\beta(-\nabla^2U(x^{\ast})))^{-1}$, so $\beta\Sigma\to(-\nabla^2U(x^{\ast}))^{-1}$ and $S=-\lambda(\beta\Sigma)\to-\lambda(-\nabla^2U(x^{\ast}))^{-1}$, negative definite since $-\nabla^2U(x^{\ast})$ is positive definite. The classical Slutsky substitution matrix is the restriction of this full matrix to the budget hyperplane, singular in the direction $p$, the rank-one wealth projection supplied by the second term of the decomposition. \qed

\subsection*{Proof of Proposition~\ref{prop:beckerlaw}}
The inequality $S_{ii}=-\beta\lambda\,\mathrm{Var}_{q^{\ast}}(x_i)\le0$ is the diagonal of the negative semidefinite matrix of Theorem~\ref{thm:slutsky}, whose proof used no property of $U$ beyond measurability; it therefore holds verbatim when $U$ is the realised value of an arbitrary preference field, in particular one that fails completeness, transitivity, or utility-representability, the only requirement being that the realised demand is the Gibbs law of some value over the budget set. Equality $S_{ii}=0$ requires $\mathrm{Var}_{q^{\ast}}(x_i)=0$, a good the consumer never varies. For the population statement, let the chosen bundles be distributed by a law $F$ supported on $B(p,w)$ and fixed independently of an own-price change in good $i$ that is wealth-compensated to hold the chosen reference bundle affordable. The compensated rise rotates the budget hyperplane about that bundle, and the new opportunity set $B(p',w')$ contains, relative to the old, fewer bundles with large $x_i$ and more with small $x_i$; integrating $x_i$ against a law supported on the contracting-in-$x_i$ set cannot raise its mean, so the population-mean consumption of good $i$ does not rise \citep{becker1962}. \qed

\bibliographystyle{elsarticle-harv}
\bibliography{refs}

\end{document}